\newtheorem{Theorem}{Theorem}[section]
\newcommand{\stm}{\mathcal{M}}
\newcommand{\F}{\mathcal{F}}
\newcommand{\R}{\mathcal{R}}
\newcommand{\Id}{\mathcal{I}}
\newcommand{\Jd}{\mathcal{J}}
\newcommand{\tg}{\tilde{g}}
\newcommand{\hg}{\hat{g}}
\newcommand{\sgn}{\mathrm{sgn}}
\newcommand{\ie}{\textit{i.e.}}
\newcommand{\Rset}{\mathbb{R}}
\begin{document}

\title[Generic Metric Fields for RPS]{Equivalent Generic Forms for Metric Fields Yielded by Relativistic Positioning Systems}

\author{Jacques L. RUBIN}

\address{Universit{\'e} de Nice--Sophia Antipolis, UFR Sciences, Institut du Non-Lin{\'e}aire de Nice (UMR~7335), 
1361 route des Lucioles, F-06560 Valbonne, France}

\email{jacques.rubin@inln.cnrs.fr}

\date{\today}

\keywords{Causal classes; Causal structure; Emission coordinates; Flags; Formal integrability; General relativity; Geometrical equivalence; Jet manifolds; Metric fields; Multi-flags; Pfaff systems; Relativistic positioning systems}

\subjclass[2010]{14M15; 35Q75; 53B30; 58A17; 83C20}

\begin{abstract}
Relativistic positioning systems provide tensors represented in $\{\ell\ell\ell\ell\}$-frames ($\ell$ for light) dual to systems of emission coordinates. We show that any Lorentzian metric field given in such a frame is isometrically equivalent to a generic metric field defined by only four positive functions depending on another specific system of  emission coordinates.
\end{abstract}

\maketitle

\section{Introduction} 

Current positioning systems, such as GPS, GLONASS, Galileo, Beidou, or \linebreak IRNSS, are not true relativistic positioning systems, as is now well accepted. Indeed, the data they provide, such as, for instance, the time stamps broadcast by the satellites of these constellations and collected by the receiver devices of the user, cannot be sent or used directly without prior algorithmic corrections. The latter are due, in particular, to relativistic effects, undergone, for instance, by the on-board atomic clocks, and incorporated in the realtime computations (via the so-called Kalman filters) to obtain the ``correct'' spacetime positions. The relativistic effects taken into account in these processes \cite{ashby2003} are mainly the gravitational frequency shifts, the first and second-order relativistic Doppler shifts, and the Sagnac effect (even if the latter is not always considered a ``true'' relativistic effect by some authors).  Other corrections are included and based on models such as those for signal transmissions through the ionosphere or for the Earth's geoid. But, the latter are not really at the heart of the designs of these positioning systems, unlike the relativistic effects.
\par
To circumvent or avoid such fundamental root defects, new designs have been investigated and  
are thought of as providing true relativistic positioning systems (RPS). They are based on new protocols of spacetime positioning primarily devised, to our knowledge, by \text{B.~Coll}, \text{J.J.~Ferrando}, \text{J.A.~Morales} and \text{A.~Tarantola} \cite{Coll:2001zr,Coll:2002ly,collfermora06:dd,collfermorb06:ps}, initially in the case of two-dimensional spacetime and in four-dimensional spacetime with the SYPOR protocol for instance \cite{Coll:2003fj}. Moreover, \text{E.~Capolongo}, \text{M.L.~Ruggiero},  and \text{A.~Tartaglia} recently evaluated  such protocols for constructing the Earth's worldline with pulsars as celestial beacons \cite{Ruggiero:2011vn,tart10:ec,Tartaglia:2011kx,Tartaglia:2011ys}. Other approaches from pulsars has been proposed by Bunadar \textit{et al.} and Sheikh \textit{et al.} for evaluating the emission coordinates of an event \cite{Bunandar:2011fk,Sheikh:2006uq}.
\par
The coordinates generated by RPS are the so-called \textit{emission coordinates} given by the time stamps broadcast by the satellites, and then, their associated dual $\{\ell\ell\ell\ell\}$-frames are made of four future light-like basis vectors. The $\{\ell\ell\ell\ell\}$-frames yielded by RPS differ strongly from these dual $\{\ell\ell\ell\ell\}$-frames. Indeed, unlike the basis vectors of the dual $\{\ell\ell\ell\ell\}$-frames, the light-like basis vectors of the $\{\ell\ell\ell\ell\}$-frames yielded by RPS are the tangent vectors of the null geodesics travelled by the signals emitted by the satellites of the constellations. Along each of these null geodesics one time stamp (emission coordinate) is necessarily constant, and thus, in full generality, none of the dual light-like basis vectors can be tangent to such null geodesics. More precisely, if we denote by $g$ the Lorentzian metric field defined on the spacetime and $d\tau^i$ the differential 1-form associated to the emission coordinate $\tau^i$ ($i=1,\ldots,4$), then, $g$ can be written in the form: $g\equiv\sum_{i,j=1}^4g_{ij}(\tau)\,d\tau^i\,d\tau^j$, and any basis vector of a $\{\ell\ell\ell\ell\}$-frame yielded by a RPS is  the $g$-dual of a 1-form $d\tau^i$ rather than, merely, its dual $\tfrac{\partial\,\,}{\partial\tau^i}$. Also, $g$ can be represented in the dual $\{\ell\ell\ell\ell\}$-frame $(\tfrac{\partial\,\,}{\partial\tau^1},\ldots,\tfrac{\partial\,\,}{\partial\tau^4})$ by the matrix $G$ such that
\begin{equation} 
G \equiv
   \begin{pmatrix}
      0 & g_{12} & g_{13} & g_{14} \\
      g_{21} & 0 & g_{23} & g_{24} \\
      g_{31} & g_{32} & 0 & g_{34} \\
      g_{41} & g_{42} & g_{43} & 0
   \end{pmatrix},
   \label{metric}
\end{equation} 
where $g_{ij}=g_{ji}\neq0$ if $i\neq j$  ($i,\,j=1,\ldots,4$) and $\sgn(g_{ij})=-\varepsilon$ whenever the signature of $g$ is $2\varepsilon$ ($\varepsilon=\pm1$).
\par 
\text{B.~Coll} and \text{J.M.~Pozo} have made an extensive study of the algebraic properties of the class of metrics obtained specifically from RPS \cite{CollPozo06} . They have shown, in particular, that in the general case the terms $g_{ij}$ can never been factorized, apart, possibly, at very particular events in the spacetime, \ie, no set $\Lambda$ of 4 nonvanishing functions $\nu_i$ exists such that, for instance, $g_{ij}\equiv\nu_i\,\nu_j$ for all $i,\,j=1,\ldots,4$ such that $i\neq j$. 
\par
However, if $n\leqslant4$, we show that there always exists a set $\Lambda$ and a metric $\tg$ which is $\{\ell\ell\ell\ell\}$-equivalent to $g$, in a meaning to be specified in the sequel, with factorized components, \ie, there exists $n$ nonvanishing functions $\tilde{\nu}_i$ such that $\tg_{ij}\equiv\tilde{\nu}_i\,\tilde{\nu}_j$ for all $i,\,j=1,\ldots,n$ such that $i\neq j$. The number of nonvanishing components of $g $ and $\tg$ is the same, \ie, we have always $n(n-1)/2$ nonvanishing components out of the diagonal for either metric $g$ and $\tg$. The essential difference between $g$ and $\tg$ is that the $n(n-1)/2$ nonvanishing components of $\tg$ are not functionally independent.
\par
This equivalence is obtained from a change of local dual $\{\ell\ell\ell\ell\}$-frame related to a local change of emission coordinates. The metric $\tg$ can be exceptionally ascribed to a metric yielded by a RPS and, possibly, only at very particular event. In other words, no RPS yields a metric such as $\tg$ since the nonvanishing components should be functionally independent. Nevertheless, the $\{\ell\ell\ell\ell\}$-equivalence between $g$ and $\tg$ involves that the geometrical spacetime structure can be equivalently described by only $n$ functions rather than $n(n-1)/2$ functions.
\par
The two theorems we present below can be considered independently on the physical application in RPS although its interest in general relativity might be strongly relevant in complement of the theory of RPS.
\par
Also, the mathematical methods employed in the proofs of the two theorems are exhaustively indicated, in particular, in the reference \cite{BryantChGoldGrif:91} and they can be gathered under the designation of formal geometrical methods on the integrability of PDEs.

\section{The equivalent generic metric field} 

Let $\stm$ be a smooth connected $n$-dimensional pseudo-Riemannian manifold endowed with a Lorentzian metric $g$ represented as in \eqref{metric} in a given $\{\ell\ell\ell\ell\}$-frame defined on an emission coordinates chart $(U,\tau^1,\ldots,\tau^n)$ where the open $U\subset\stm$. We denote by $\partial_i$ the partial derivative with respect to the $i$-th emission coordinate $\tau^i$ of $\tau$. Then, the present paper is devoted to the proof of the following result:
\begin{Theorem}
If $n\leqslant4$, there always exists a smooth local diffeomorphism $f$ of which the Jacobian matrix is orthogonal and $n$ smooth positive functions $\nu_i$,  both defined on an open neighborhood $V\subset{U}$ of any given point of $U$, such that for all $\tau\equiv(\tau^1,\ldots,\tau^n)\in{V}$ the relations
\begin{equation} 
\tg_{ij}
   \equiv 
    \sum_{r,\,s=1}^n
    g_{rs}(f)(\partial_if^r)(\partial_jf^s)
    =\epsilon_{ij}\,\nu_i\,\nu_j\,,
\qquad
    i,j=1,\ldots,n\,,
\label{pdeg}
\end{equation} 
hold with $\epsilon_{ij}=\sgn(g_{ij})=\sgn(\tg_{ij})$ whenever $i\neq j$ and $\epsilon_{ij}=0$ otherwise. Then, we say that the ``generic'' metric $\tg$ is \emph{$\{\ell\ell\ell\ell\}$-equivalent} to $g$ (through $f$).
\end{Theorem}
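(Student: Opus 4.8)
The plan is to reduce the theorem to its only substantive case, $n=4$, and then to recast \eqref{pdeg} --- together with the orthogonality constraint on $\partial f$ and the requirement that $\tg$ again have a vanishing diagonal --- as an overdetermined first-order Pfaffian system, to be analysed by the formal-integrability techniques of \cite{BryantChGoldGrif:91}.

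\emph{Reduction to $n=4$.} For $n\leqslant 3$ one may take $f=\mathrm{id}$: the case $n=1$ is vacuous; for $n=2$ set $\nu_1:=|g_{12}|$, $\nu_2:=1$; for $n=3$ set $\nu_i:=\bigl(|g_{ij}|\,|g_{ik}|/|g_{jk}|\bigr)^{1/2}$ for $\{i,j,k\}=\{1,2,3\}$, which is smooth and positive since the $g_{ij}$ never vanish, has identity (hence orthogonal) Jacobian, and satisfies $\epsilon_{ij}\nu_i\nu_j=g_{ij}=\tg_{ij}$. For $n=4$ this device fails exactly at the Coll--Pozo obstruction: a factorisation $g_{ij}=\nu_i\nu_j$ of the six components would force the two ``Pl\"ucker'' identities $g_{12}g_{34}=g_{13}g_{24}=g_{14}g_{23}$, which generically hold at no point. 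The role of $f$ is precisely to restore these identities for the pulled-back components while keeping the diagonal null.

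\emph{Reformulation.} Put $h:=f^{-1}$ (new coordinates) and $\theta^i:=(\nu_i\circ h)\,dh^i$ (no summation). Then \eqref{pdeg} becomes: $g=\sum_{i\neq j}\epsilon_{ij}\,\theta^i\theta^j$, with each $\theta^i$ nowhere zero and Frobenius-integrable, $\theta^i\wedge d\theta^i=0$ (the integrating factor being taken positive), and $(dh^1,\dots,dh^n)$ a pointwise coframe; moreover the orthogonality of $\partial f$ translates into pointwise orthogonality of the $\theta^i$ for the flat metric $\sum_a(d\tau^a)^2$ of the emission chart. Pointwise this is consistent: $\sum_{i\neq j}\epsilon_{ij}x^ix^j$ has matrix $-\varepsilon\,E$, $E:=\mathbf 1\mathbf 1^{\mathsf T}-I_n$, of the same (Lorentzian) signature as $g$, so an orthogonal solution exists and the solutions form a positive-dimensional family --- a torsor under the subgroup of $O(1,n{-}1)$ preserving that flat orthogonality (for $n=4$, generated by the antisymmetric matrices with vanishing row sums). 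Fixing a smooth background coframe $\theta^i_0$ realising $g$ this way, with structure functions $c^i_{jk}$ given by $d\theta^i_0=\sum_{j<k}c^i_{jk}\,\theta^j_0\wedge\theta^k_0$, the problem reduces to finding a section $A$ of the associated bundle such that each $\theta^i:=\sum_jA^i_j\theta^j_0$ is integrable; the remaining requirements (coframe property, positivity of the $\nu_i$, local invertibility of $f$, and the sign matching $\sgn\tg_{ij}=\sgn g_{ij}$) are open and are secured afterwards by the choice of initial data.

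\emph{Formal integrability and conclusion.} I would then read the integrability equations $\theta^i\wedge d\theta^i=0$, rewritten through the $c^i_{jk}$, together with the orthogonality condition, as a closed submanifold $\R_1\subset J^1$ of the appropriate jet bundle, compute its symbol, and prolong. The point to establish is that the lowest-order obstruction --- essentially the Pl\"ucker identities re-expressed via the structure functions --- and every obstruction it subsequently generates vanish automatically, so that after finitely many prolongations the system is formally integrable and involutive; the relevant derived flag and Cartan characters are then to be computed. Local solutions on a neighbourhood $V$ of the given point follow by the Cartan--K\"ahler theorem in the real-analytic category (and by the existence theorem for involutive systems, resting on the vanishing of the appropriate Spencer cohomology, in the $C^\infty$ category). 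Choosing the $1$-jet at the base point so that $\tg$ there has the standard form $-\varepsilon\,\mathrm{diag}(\nu)\,E\,\mathrm{diag}(\nu)$ --- possible because the orthogonal group acts with open orbits on the space of Lorentzian $\{\ell\ell\ell\ell\}$-forms --- all the open conditions hold there and hence on a smaller $V$; one then sets $\nu_i:=\bigl(-\varepsilon\,\tg_{ij}\tg_{ik}/\tg_{jk}\bigr)^{1/2}$, which by the Pl\"ucker identities is independent of $j,k\neq i$, smooth and positive, and satisfies \eqref{pdeg}.

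\emph{Main obstacle.} The crux is the involutivity analysis of this overdetermined system for $n=4$: one must carry out the prolongation with the structure equations explicitly and verify that the Pl\"ucker-type integrability conditions --- and everything they generate --- stay consistent while the orthogonality constraint restricts but does not collapse the symbol; this is exactly where the hypothesis $n\leqslant 4$ enters. A secondary but real difficulty is passing from the real-analytic to the smooth category, for which the bare Cartan--K\"ahler theorem does not suffice.
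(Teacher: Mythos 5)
Your outline stops short of the theorem: everything you label ``the main obstacle'' --- carrying out the prolongation of the coframe system $\theta^i\wedge d\theta^i=0$ plus orthogonality, verifying that the Pl\"ucker-type obstructions and all their consequences vanish, computing the Cartan characters, and then somehow transferring a Cartan--K\"ahler existence result from the analytic to the smooth category --- is precisely the content of the proof, and none of it is done. As written, the argument establishes only the pointwise (linear-algebraic) solvability of the factorization at each event, which is the easy part; without the involutivity computation you have no local solution $f$, and without a resolution of the analytic-versus-smooth issue you have no theorem in the stated ($C^\infty$) category. So there is a genuine gap, and it sits exactly where you placed it.

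It is worth noting that the paper deliberately does \emph{not} take the route you propose. It never invokes Cartan--K\"ahler, Cartan's test, polar spaces, or Spencer cohomology. Instead it encodes the solutions as a submanifold $\R_1$ of the groupoid $\Pi_1(\pi_4)$ of invertible $1$-jets, cut out by the six equations \eqref{r1phi} (your Pl\"ucker identities together with the null-diagonal conditions, after eliminating the $\nu_i$), proves pointwise solvability by a sign analysis of the quadratic forms $Q$ and $R$ leading to the inequality $\hg_{34}\,\hg_{24}\,\hg_{23}<0$, and then exploits a freedom you have not used: the contact $1$-forms $\omega^i_j=d\psi^i_j-\sum_k z^i_{jk}\,d\tau^k$ attached to the first-order system are not canonical, and the functions $z^i_{jk}$ may be chosen at will subject to the symmetry and curvature-free conditions \eqref{dz}. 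With the explicit choice $z^i_{jk}=\psi^i_j\sum_h\psi^h_k z_h(\psi)$, $z_k=-\tfrac14\sum_i\Gamma^i_{ik}$, the resulting Pfaff system $T_0(z)$ is Frobenius-integrable \emph{and} its differential ideal contains the ideal generated by the $\F_i$, so the smooth Frobenius theorem alone yields local solutions --- no prolongation, no involutivity test, and no analyticity hypothesis. If you want to complete your version, you must either perform the full involutivity analysis you defer (and still face the smooth-category problem), or find the analogue, in your coframe formulation, of this explicit integrable completion; the latter is the idea your proposal is missing.
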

Note that the non-diagonal terms of $g$ are not vanishing. 
Furthermore, if $n\leqslant3$, the result is trivial: take the identity map for $f$ and the functions $\nu_i$ are unique. Moreover, if $n=2$, we can make a separation of variables in $g_{12}$ such that each function $\nu_i$ ($i=1,2$) depends on only one emission coordinate (because any two-dimensional Riemann manifold is conformally flat) \cite{collfermora06:dd,collfermorb06:ps}. In cases of dimension greater than 4, some constraints on the definition of $g$ must be imposed. 
\par
The proof of this theorem presented below is made in the framework of the smooth category rather than the analytic category which is the standard situation for the application of the Cartan-K{\"a}hler theorem. 
Hence, no particular analytical criteria are discussed in relation to analytical boundary conditions for instance, and only the smooth Frobenius conditions are applied to check the integrability of the different PDEs involved in the proofs. Actually, we do not use either the Cartan-K{\"a}hler theorem or Cartan's test for involutivity. Hence, neither the computations of the codimensions of the polar spaces associated to the integral elements of certain flags nor the evaluations of their K{\"a}hler-regularities or regularities are performed \cite{BryantChGoldGrif:91}. The main reason is due to the non-standard way we ``transform'' a given set of algebraic equations defined in a jet bundle and associated to a system of PDEs  to an associated Pfaff system of contact 1-forms. We just make a little step aside in the definition of this ``transformation'' with strong advantages in the proof as a result, as there appear to be some often unnoticed forms of indetermination in the definition of the associated Pfaff system of a system of PDEs. 
\begin{proof} 
Let $\pi_n$ be the trivial fibration $\pi_n:\stm^2\equiv\stm\times\stm\longrightarrow\stm$, corresponding to the projection onto the first factor. We denote by $J_k(\pi_n)$ the fiber bundle of jets of order $k\geqslant0$ of the local smooth sections of $\pi_n$. In particular, we have $J_0(\pi_n)\equiv\stm^2$ with local coordinates $(\tau,\psi)\equiv(\tau^1,\,\ldots,\,\tau^n,\,\psi^1,\,\ldots,\,\psi^n)$. Furthermore, let 
\[
\psi_1\equiv(\tau^1,\,\ldots,\,\tau^n,\,\psi^1,\,\ldots,\,\psi^n,\,\psi^1_1,\,\psi_2^1,\,\ldots,\,\psi^i_j,\,\ldots,\,\psi^n_{n-1},\,\psi^n_n)
\]
be a local system of coordinates on $J_1(\pi_n)$. We denote also by $\Pi_k(\pi_n)\subset J_k(\pi_n)$ the set of invertible elements of $J_k(\pi_n)$, \ie, the set of $k$-jets of local smooth diffeomorphisms on $\stm$. $\Pi_k(\pi_n)$ is a groupoid with source map $\alpha_k:\Pi_k(\pi_n)\longrightarrow\stm$ where $\stm$ is the first factor of $\stm^2$ and the target map $\beta_k:\Pi_k(\pi_n)\longrightarrow\stm$ where we project onto the second factor.
Also, we denote by $\varPi_k(\pi_n)$ the presheaf of germs of local smooth $\alpha_k$-sections of $\Pi_k(\pi_n)$.
Then, we consider any solution of the system of PDEs \eqref{pdeg} as a sub-manifold of $\Pi_1(\pi_n)$ transversal to the $\alpha_k$-fibers and defined from the following system $\R_1$ of equations on the presheaf $\varPi_1(\pi_n)$:
\begin{equation}
\sum_{r,\,s=1}^ng_{rs}(\psi)\,\psi_i^r\,\psi_j^s-\epsilon_{ij}\,\nu_i\,\nu_j=0\,,
\qquad\qquad
i,j=1,\ldots,n\,,
\label{glambd}
\end{equation}
where $\nu_i>0$.
\par\medskip
Then, we denote also by $\hg_{ij}$ the terms such that
\[
\hg_{ij}\equiv\sum_{r,\,s=1}^ng_{rs}(\psi)\,\psi_i^r\,\psi_j^s\,,
\qquad\qquad
i,j=1,\ldots,n\,.
\]
Hence, $\R_1$ is also the following set of algebraic equations:
\[
\R_1:
\quad
\begin{cases}
\hg_{ii}=0\,,&\\
\hg_{ij}=\epsilon_{ij}\,\nu_i\,\nu_j\,,&\qquad i\neq j=1,\,\ldots,\,n\,.
\end{cases}
\]
We deduce easily for all distinct indices $i$, $j$ and $k$ that $\epsilon_{ij}\,\epsilon_{jk}\,\hg_{ij}\,\hg_{jk}=(\nu_j)^2\,\hg_{ik}\,\epsilon_{ik}$\,. Then, we must have
\begin{equation}
\R_1:\quad
\begin{cases}
\hg_{ii}=0\,,\\
|\hg_{ij}\,\hg_{jk}|=(\nu_j)^2\,|\hg_{ik}|\,,\quad\text{for all $i$, $j$ and $k$ distinct in $\{1,\ldots,n\}$.}
\end{cases}
\label{r1}
\end{equation}
In particular, if $n=4$ in \eqref{r1}, then, apart from the set of equations $\hg_{ii}=0$, the second set of equations are necessarily satisfied unless the two following deduced equations are not:
\[
|\hg_{12}\,\hg_{34}|=|\hg_{13}\,\hg_{24}|\,,
\qquad
|\hg_{13}\,\hg_{24}|=|\hg_{14}\,\hg_{23}|\,.
\]
Therefore, if $n=4$, the system $\R_1$ reduces to the following set of PDEs:
\[
\R_1:
\quad
\begin{cases}
\hg_{ii}=0\,,\qquad i=1,\ldots,4\,,&\\
|\hg_{12}\,\hg_{34}|=|\hg_{13}\,\hg_{24}|\neq0\,,&\\
|\hg_{13}\,\hg_{24}|=|\hg_{14}\,\hg_{23}|\neq0\,.&
\end{cases}
\]
Rewriting this system of PDEs without the absolute values, we obtain
\begin{equation}
\R_1:
\quad
\begin{cases}
\F_i(\psi_1)\equiv\sum_{r,s=1}^4g_{rs}(\psi)\,\psi^r_i\,\psi^s_i=0\,,\qquad i=1,\ldots,4\,,&\medskip\\
\F_5(\psi_1)\equiv\sum_{i,j,k,h=1}^4g^\epsilon_{ijkh}(\psi)\,\psi^i_1\,\psi^j_2\,\psi^k_3\,\psi^h_4=0\,,&\medskip\\
\F_6(\psi_1)\equiv\sum_{i,j,k,h=1}^4g^{\epsilon'}_{ijhk}(\psi)\,\psi^i_1\,\psi^j_2\,\psi^k_3\,\psi^h_4=0\,,&
\end{cases}
\label{r1phi}
\end{equation}
where
\[
g^\epsilon_{ijkh}\equiv g_{ij}\,g_{kh}-\epsilon\,g_{ik}\,g_{jh}\,,
\qquad\qquad
\epsilon=\pm1\,,
\quad
i,j,k,h=1,\ldots,4\,.
\]
Before going further, we must know if there exist solutions to the system of homogeneous polynomial equations \eqref{r1phi} in the variables $\psi^i_j$ whenever $\psi$ is fixed. First, we denote by $\phi_i$  the linearly independent column vectors such that $\phi_i\equiv(\psi^j_i)$. From the first four equations $\F_i(\psi_1)=0$ ($i=1,\,\ldots,:,4$), the former must be light-like vectors which still exist since $g$ is Lorentzian. Second, the last two functions can be rewritten as $\F_5(\psi_1)=g(\phi_1,\phi_5)$ and $\F_6(\psi_1)=g(\phi_1,\phi_6)$ where
$\phi_5\equiv\hg_{34}\,\phi_2-\epsilon\,\hg_{24}\,\phi_3$ and $\phi_6\equiv\hg_{34}\,\phi_2-\epsilon'\hg_{23}\,\phi_4$.
Therefore, the nonvanishing vectors $\phi_5$ and $\phi_6$ are collinear to $\phi_1$ or time-like ($\hg_{ij}\neq0$ if $i\neq j$). However, because the four vectors $\phi_i$ ($i=1,\,\ldots,\,4$) are linearly independent, then $\phi_5$ and $\phi_6$ must be time-like. Hence, the signs of their norms $g(\phi_5,\,\phi_5)$ and $g(\phi_6,\,\phi_6)$ are equal to the sign of the signature $2\varepsilon$ of $g$, \ie, we have
\[
\sgn(\hg_{34}\,\hg_{24}\,\hg_{23})=-\epsilon\varepsilon=-\epsilon'\varepsilon\,.
\]
Thus, in particular, we must have $\epsilon'=\epsilon$ in the system \eqref{r1phi}. Besides, $\epsilon$ is arbitrary, and then,  from now and throughout, we set also $\epsilon=\varepsilon$. As a result, we have solutions to the system \eqref{r1phi} if and only if\,\footnote{Note that this inequality illustrates the first form of the Tarski-Seidenberg theorem \cite{BochCostRoy:98,CosteRAAG}.}
\begin{equation}
\hg_{34}\,\hg_{24}\,\hg_{23}<0\,.
\label{3hg}
\end{equation}
Next, we consider the expression $\hg_{34}\,\hg_{24}\,\hg_{23}$ as a quadratic form $Q$ with respect to $\phi_2$. We obtain $\hg_{34}\,\hg_{24}\,\hg_{23}=Q(\phi_2,\,\phi_2)\equiv\sum_{i,j=1}^4 Q_{ij}\,\phi^i _2\,\phi^j_2$ where 
\[
Q_{ij}
=\left(\sum_{h,\,k=1}^ng_{rs}(\psi)\,\phi_3^h\,\phi_4^k\right)\left(\sum_{s=1}^4g_{js}(\psi)\,\phi^s_3\right)\left(\sum_{r=1}^4g_{ir}(\psi)\,\phi^r_4\right),
\] 
and then, the inequality \eqref{3hg} is always satisfied if $Q$ is not a positive elliptic form. For, it suffices that one of the diagonal terms $Q_{ii}$ to be non-positive since, in this case, it implies the existence of basis vectors of non-positive norms with respect to $Q$ if $Q$ is non-degenerate.\footnote{We can use also the Coll-Morales rules \cite[see \S\,III]{collmor93} generalizing more effectively the Jacobi, Gundelfinger and Frobenius rules with the notion of \textit{causal sequence}  $(i_1,\,i_2,\,i_3)\equiv(\sgn(\triangle_1),\,\sgn(\triangle_2),\,\delta\,\sgn(\triangle_3))$ where the $\triangle_k$'s are the first three \textit{leading principal minors} of $Q$ of order $k$ and $\delta$ is the \textit{determinant index}. In the present case, the causal sequence should differ from the causal sequence $(1,\,1,\,1)$.} We cannot ensure in full generality the non-degeneracy of $Q$, and thus, we impose, in particular, the condition $Q_{11}<0$ only and not the condition $Q_{11}=0$. Proceeding in the same way, the term $Q_{11}$ is still considered as a quadratic form $R$ with respect to $\phi_3$. And again, we have $R(\phi_3,\phi_3)\equiv Q_{11}<0$ if $R$ is not a positive elliptic form. For the same reasons as above for $Q_{11}$,
this condition is always satisfied, in particular, if there exists a diagonal term $R_{ii}$ such that $R_{ii}<0$.
We consider the term $R_{22}$ such that
\begin{equation}
R_{22}\equiv{}g_{12}(\psi)\left(\sum_{i=1}^4g_{1i}(\psi)\,\phi^i_4\right)\left(\sum_{j=1}^4g_{2j}(\psi)\,\phi^j_4\right).
\label{r22}
\end{equation}
Then, because $g$ is non-degenerate, the coefficients $g_{2j}$ and $g_{1j}$ for $j=1,\ldots,4$ cannot be proportional. Therefore,  the two hyperplanes in $\Rset^4$ defined by the two last factors in \eqref{r22} and linear with respect to $\phi_4$ are strictly distinct. As a result, $\Rset^4$ is divided by these two hyperplanes into four connected open subsets. Then, we can always find a vector $\phi_4\in\Rset^4$ in one of these four subsets such that $\left(\sum_{i=1}^4g_{1i}(\psi)\,\phi^i_4\right)\left(\sum_{j=1}^4g_{2j}(\psi)\,\phi^j_4\right)$ has the opposite sign of $g_{12}(\psi)(\neq0)$ and thus such that $R_{22}<0$.
Hence, there always exist real solutions to the system \eqref{r1phi} whatever are the source $\tau$ and the target $\psi$. 
\par\smallskip
Additionally, from the inequality \eqref{3hg} and the `continuity of roots' property\footnote{The `continuity of roots' property ensures the roots of a given finite set of algebraic equations to be continuously depending on the coefficients parameterizing these algebraic equations.}  \cite[p.~363]{whitney72}, we deduce that, given a point $\psi$, there always exists a maximal open subset $U_\psi\subset\stm$ of $\psi$ such that this set of solutions $S_\psi$ is always an open smooth manifold of \textit{constant} dimension at least 10 on $U_\psi$\,. 
As a result, $U_\psi$ is also necessarily closed, but then, because $\stm$ is connected, we deduce that $\dim S_\psi=m$ is a constant on $\stm$.
Moreover, $\alpha_1\times\beta_1$ is a surmersion on $\stm^2$, and thus, the latter has no critical points in $\R_1$. Therefore, we obtain that $m=10$ \cite[see Lemma 1, p.11]{MilnorTopo:97}.
\par\smallskip
It follows that the restrictions to $\R_1$ of the source and target maps are surmersions, and then, the system $\R_1$ is, respectively, \textit{formally integrable} (as a system of local diffeomorphisms defined on the whole of $\stm$), and  \textit{homogeneous} (transitive diffeomorphisms from opens to any other opens in $\stm$). And then, $\R_1$ is a differentiable manifold such that $\dim\R_1=18$.
\par
Next, we consider the following canonical contact structure $S_0$ of width $n$ (\ie, $n$-flag  \cite{KumpRub:02,MormulMulti:04}) and length $1$ on $\Pi_1(\pi_n)$ generated by the set $\left\{\omega^1,\,\omega^2,\ldots,\,\omega^n\right\}$ of contact 1-forms  $\omega^i\in T^*\!J_1(\pi_n)$ such that
\begin{equation}
S_0:\quad
\begin{cases}
\omega^1=&d\psi^1-\sum_{i=1}^n\psi^1_i\,d\tau^i\,,\medskip\\
\omega^2=&d\psi^2-\sum_{j=1}^n\psi^2_j\,d\tau^j\,,\medskip\\
\ldots=&\dotfill\,,\,\,\medskip\\
\omega^n=&d\psi^n-\sum_{k=1}^n\psi^n_k\,d\tau^k\,.
\end{cases}
\label{s0}
\end{equation}
Obviously, the \textit{terminal system} $S_1$ of $S_0$ is vanishing \cite{KumpRub:02}.
Then, we complement the set of contact 1-forms generating $S_0$ with another set of 1-forms $\omega^i_j$ on $\Pi_1(\pi_n)$ defined by the relations:
\begin{equation}
\omega^i_j\equiv d\psi^i_j-\sum_{k=1}^nz^i_{jk}(\psi_1)\,d\tau^k\,,
\qquad\qquad
i,j=1,\ldots,n\,,
\label{omij}
\end{equation}
where any given set of functions $z^i_{jk}(\psi_1)\in C^\infty(\Pi_1(\pi_n))$ (with $i,j,k,h=1,\ldots,n$) must satisfy
\begin{equation}
z^i_{jk}(\psi_1)=z^i_{kj}(\psi_1)\,,
\qquad\qquad
D_kz^i_{jh}(\psi_1)=D_hz^i_{jk}(\psi_1)\,,
\label{dz}
\end{equation}
where $D_k$ is the formal differentiation with respect to $\tau^k$ defined by the formula
\[
D_k\equiv
\frac{\partial\,\,\,\,}{\partial\,\tau^k}
+\sum_{i=1}^n\psi^i_k\,\frac{\partial\,\,\,\,}{\partial\,\psi^i}
+\sum_{i,j=1}^nz^i_{jk}(\psi_1)\,\frac{\partial\,\,\,\,}{\partial\,\psi^i_j}\,,
\qquad\qquad
k=1,\ldots,n\,.
\]
From this definition and for any smooth function $\F$ defined on $J_1(\pi_n)$ we find that the commutator $[D_k,\,D_h]$ satisfies the relation
\begin{equation}
[D_k,\,D_h](\F)=\sum_{i,j=1}^n\left(D_kz^i_{jh}-D_hz^i_{jk}\right)\frac{\partial\,\F}{\partial\,\psi^i_j}\,.
\label{comD}
\end{equation}
Then, we denote by $T_0(z)\supseteq S_0$ this new contact structure generated by the contact 1-forms $\omega^i$ and the 1-forms $\omega^i_j$ ($i,j=1,\ldots,n$). 
In particular, from \eqref{comD} and the relation $d^2\omega=0$ for any smooth $p$-forms $\omega$ in $\Lambda T^*J_1(\pi_n)$, we deduce also that the \textit{Martinet structure tensor} $\delta\equiv d \mod T_0(z)$ is such that $\delta^2=0$ \cite{MartinetDelta:74}.
\par
Moreover, from relations \eqref{s0} and \eqref{omij}, we obtain:
\[
\begin{cases}
\,d\omega^i=\sum_{k=1}^nd\tau^k\wedge\omega^i_k\,,&\medskip\\
\,d\omega^j_k=
\sum_{h,r,s=1}^n\left(\dfrac{\partial z^j_{kh}}{\partial\psi^r_s}\right)d\tau^h\wedge\omega^r_s
+\sum_{h,r=1}^n\left(\dfrac{\partial z^j_{kh}}{\partial\psi^r}\right)d\tau^h\wedge\omega^r\,,&
\end{cases}
\]
and then, $T_0(z)$ satisfies the Frobenius conditions (equivalent to $\delta^2=0$) and is an integrable Pfaff system on $\Pi_1(\pi_n)$.
\par
Next, we consider $\R_1$ as a presheaf $\Id_1$ of ideals locally finitely generated  by the functions $\F_i$ ($i=1,\dots,6$) defined on $\Pi_k(\pi_4)$ and we assume that any manifold on which this presheaf vanishes, \ie, the sub-manifold defined from a solution, is an integral sub-manifold of a $T_0(z)$ in $\Pi_k(\pi_4)$. We denote by $V_1(z)$ the foliation of all of these integral sub-manifolds. This latter version conforms better with the classical concepts of integral manifolds and differs from the approach of PDEs translated in terms of presheafs of Pfaff systems of contact 1-forms satisfying the Frobenius conditions (see for instance \cite{Bottint:70}). 
\par
As a consequence, denoting by $\Jd_1(z)$ the presheaf of differential ideals generated by $T_0(z)$ on $J_1(\pi_4)$, we say that $\R_1$ is integrable on $\stm$ if there exists a sub-manifold of solutions $V_1(z)\subseteq \Pi_1(\pi_4)$ and a nonvanishing presheaf $\Jd_1(z)$ such that $\Id_1\subseteq\Jd_1(z)$ on $V_1(z)$.
\par
In other words, if a set of functions $z^i_{jk}$ exists satisfying the latter condition, a smooth local diffeomorphism $f$ of $\stm$ is a solution of $\R_1$ if and only if
\[
\begin{cases}
\F_i(j_1(f))=\iota_0\,,&\qquad i=1,\dots,6\,,\\
f^*(\omega^i)=0\,,&\qquad
f^*(\omega^j_k)=0\,,\qquad i,j,k=1,\ldots,4\,,
\end{cases}
\]
where $\iota_0$ is the zero function on $\stm$ and $j_1(f)$ is the first prolongation of $f$; and thus a local section of $\Pi_1(\pi_4)$. Hence, from \eqref{omij}, we obtain that
\[
\begin{cases}
df^i=\sum_{k=1}^4(\partial_kf^i)\,d\tau^k\,,&i=1,\ldots,4\,,\\
df^j_k=\sum_{h=1}^4z^j_{kh}(j_1(f))\,d\tau^h\,,&j,k=1,\ldots,4\,.
\end{cases}
\]
And then, from the second order of derivation and from the successive prolongations, all of the derivatives of $f$ are functionals of the derivatives of $f$ of order less than or equal to one. As a result, a Taylor expansion for $f$ can be deduced with Taylor coefficients defined from the Taylor coefficients of $f$ of order less than or equal to one only. Thus, we obtain a formal Taylor expansion for $f$ which can be convergent on a suitable relatively compact open neighborhood $U_\tau$ of any point $\tau\in\stm$ if some Lipschitzian conditions on the functions $z^j_{kh}$ are satisfied on $(\alpha_1)^{-1}(U_\tau)\cap V_1(z)$; justifying the definition of integrability given above for $\R_1$.
\par
To satisfy the condition $\Id_1\subseteq\Jd_1(z)$ on a manifold $V_1(z)$, we must have $d\F_i\equiv0\mod T_0(z)$ for all of the indices $i=1,\ldots,6$ on $V_1(z)$. We obtain the following system $\mathcal{S}(z)$ of 24 linear equations with 24 unknowns $z^i_{jk}$:
\begin{multline*}
\delta\F_i=0\Longrightarrow\sum_{r,s=1}^4\left(
\left(\sum_{k=1}^4(\partial_kg_{rs})(\psi)\,\psi^k_h\right)\psi^r_i\,\psi^s_i
+g_{rs}(\psi)\,\psi^r_i\,z^s_{ih}
\right)=0\,,\hfill
\end{multline*}
\begin{multline*}
\delta\F_5=0\Longrightarrow
\sum_{i,j,k,h,r=1}^4(\partial_rg^\varepsilon_{ijkh})(\psi)\,\psi^i_1\,\psi^j_2\,\psi^k_3\,\psi^h_4\,\psi^r_s\\
+\sum_{i,j,k,h=1}^4g^\varepsilon_{ijkh}
\left\{
\psi^i_1\,\psi^j_2\,\psi^k_3\,z^h_{4s}+\psi^i_1\,\psi^j_2\,z^k_{3s}\,\psi^h_4
\right.\\
\left.
+\,\psi^i_1\,z^j_{2s}\,\psi^k_3\,\psi^h_4+z^i_{1s}\,\psi^j_2\,\psi^k_3\,\psi^h_4\right\}=0\,,
\end{multline*}
\begin{multline*}
\delta\F_6=0\Longrightarrow
\sum_{i,j,k,h,r=1}^4(\partial_rg^{\varepsilon}_{ijhk})(\psi)\,\psi^i_1\,\psi^j_2\,\psi^k_3\,\psi^h_4\,\psi^r_s\\
+\sum_{i,j,k,h=1}^4g^{\varepsilon}_{ijhk}
\left\{
\psi^i_1\,\psi^j_2\,\psi^k_3\,z^h_{4s}+\psi^i_1\,\psi^j_2\,z^k_{3s}\,\psi^h_4
\right.\\
\left.
+\,\psi^i_1\,z^j_{2s}\,\psi^k_3\,\psi^h_4+z^i_{1s}\,\psi^j_2\,\psi^k_3\,\psi^h_4
\right\}=0\,.
\end{multline*}
Note that if $n>4$ we have more equations than unknowns, and then not all metric fields  $g$ are admissible to satisfy the conditions of the theorem.
Now, setting for all of the functions $z^i_{jk}$ the relations
\begin{equation}
z^i_{jk}(\psi_1)=\psi^i_j\,\sum_{h=1}^4\psi^h_k\,z_h(\psi)\,,
\label{zijk}
\end{equation}
where the functions $z_k$ depend on $\psi$, we find that the unique solution of $\mathcal{S}(z)$ is the set of functions 
$z^i_{jk}$ such that
\begin{equation}
z_k(\psi)=-\frac{1}{8}\sum_{i,j=1}^4g^{ij}(\psi)\,(\partial_kg_{ij})(\psi)\equiv-\frac{1}{4}\sum_{i=1}^4\Gamma^i_{ik}(\psi)\,,
\label{zk}
\end{equation}
where the $\Gamma^i_{jk}$ are the Christoffel symbols of $g$. Then, it remains to see that the conditions \eqref{dz} are satisfied.
For, we must have the relations
\[
\sum_{h=1}^4\left(z^i_{jr}(\psi_1)\,\psi^h_k-z^i_{jk}(\psi_1)\,\psi^h_r\right)z_h(\psi)=0\,,
\]
which are, actually, verified with the functions $z^i_{jk}$ given by the relations \eqref{zijk} with \eqref{zk}.
Moreover, because no algebraic constraints exist on $\psi_1$, apart from those obtained from the vanishing of the functions $\F_i$ which are elements of $\Id_1\subseteq\Jd_1(z)$, then the manifold $V_1(z)$ is the whole of the open manifold $\Pi_1(\pi_4)$. Furthermore, the 1-forms $\omega^k$ and $\omega^i_j$ are the so-called \textit{basic 1-forms} \cite{MolinoBasic:77} associated with any \textit{complete transversally parallelizable} foliation. Lastly, at any given point $\tau$, the finite system of equations \eqref{r1phi} in the variables $\psi_1$ always have solutions, and the set of positive functions $\nu_i$ is not unique. 
\end{proof}
Besides, we note that $\R_1$ is not a Lie groupoid \cite{Mackenzie:1987fk} because if $g$ is $\{\ell\ell\ell\ell\}$-equivalent to $\epsilon_{ij}\,\nu_i\,\nu_j$ and $\epsilon_{ij}\,\tilde\nu_i\,\tilde\nu_j$ through, respectively, the diffeomorphisms $f$ and $\tilde{f}$, then, there may not always be four positive functions  $\hat\nu_i$ such that  $g$ would be $\{\ell\ell\ell\ell\}$-equivalent to $\epsilon_{ij}\,\hat\nu_i\,\hat\nu_j$ through  $f\circ\tilde{f}$ or $\tilde{f}\circ f$. Nevertheless, we have an associated \textit{principal groupoid} regarded as the graph of the $\{\ell\ell\ell\ell\}$-equivalence relation, and then the equivalence class $[g]$ of the given metric $g$ is a source fiber in this groupoid. Moreover, if $\stm$ is time oriented, \ie, there exists a complete (future time-like) vector field $\xi$ on $\stm$, then, because $\R_1$ is also a differentiable $\alpha_1$-fiber bundle, we also have in the smooth category the following:
\begin{Theorem}
If $n=4$, then, given a Lorentzian metric $g$ on $\stm$ assumed to be time oriented, connected and simply connected, then, there exists only one smooth diffeomorphism $f^i(\tau)\equiv\psi^i$ being a solution of $\R_1$ of which the Jacobian matrix is an element of $SO(4)$; and, as a result, there is a unique set of four positive functions $\nu_i$.
\end{Theorem}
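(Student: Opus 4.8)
The plan is to combine the rigid structure of $\R_1$ already obtained in the proof of the first theorem with the completeness furnished by time orientation and the triviality of monodromy furnished by simple connectedness; the group $SO(4)$ (rather than $O(4)$) will be used to discard the surviving discrete indeterminacies. The first thing I would extract is the constraint that the explicit choice \eqref{zijk}--\eqref{zk} of the functions $z^i_{jk}$ imposes on every solution. Along a solution $f$ one has $\partial_k\partial_jf^i=z^i_{jk}(j_1(f))=(\partial_jf^i)\,w_k$ with $w_k\equiv\sum_{h=1}^4(\partial_kf^h)\,z_h(f)$; since $z_h=-\tfrac14\sum_i\Gamma^i_{ih}=-\tfrac14\,\partial_h\log\sqrt{|\det g|}$ is a gradient, the $1$-form $w\equiv\sum_kw_k\,d\tau^k$ is closed, hence $w=dW$ on the simply connected domain with $e^{W(\tau)-W(\tau_0)}=\bigl(|\det g(f(\tau_0))|/|\det g(f(\tau))|\bigr)^{1/8}>0$, and integrating gives $(\partial_jf^i)(\tau)=e^{W(\tau)-W(\tau_0)}(\partial_jf^i)(\tau_0)$. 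Thus the Jacobian of a solution is at every point a fixed positive multiple of its value $A\equiv(\partial_jf^i)(\tau_0)$ at $\tau_0$: the ``conformal frame'' of $Df$ is parallel along $f$, $\det Df$ has constant sign on the connected $\stm$, and the requirement ``$Df\in SO(4)$'' is controlled by $A$ alone, since it pins the conformal factor and selects one of the two components of the orthogonal normalization.

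Next I would pass to the bundle picture. By the first theorem $\R_1\subset\Pi_1(\pi_4)$ is an $18$-dimensional manifold on which $\alpha_1$ restricts to a surmersion, so $\alpha_1\colon\R_1\to\stm$ is a differentiable fiber bundle; the integral leaves of $T_0(z)|_{\R_1}$ are exactly the prolongations $j_1(f)$ of the solutions and are transverse to the $\alpha_1$-fibers, hence they constitute a flat Ehresmann connection on this bundle, and since the foliation $V_1(z)$ is complete transversally parallelizable its leaves are complete. Time orientation supplies a complete future time-like field $\xi$, and together with the transitivity (homogeneity) of $\R_1$ this lets each local solution be continued to a diffeomorphism defined on all of $\stm$, so that the global horizontal sections of $\R_1\to\stm$ are precisely the global solutions. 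Because $\stm$ is simply connected the holonomy of the flat connection is trivial, hence a global horizontal section is determined by its value at $\tau_0$: the evaluation map $f\mapsto j_1(f)(\tau_0)$ is a bijection from the set of global solutions onto the fiber $\R_1|_{\tau_0}$.

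It then remains to show that the normalization collapses this fiber to a single point. A point of $\R_1|_{\tau_0}$ is a pair $(\psi_0,A)$ with $A$ invertible, the columns of $A$ being $g(\psi_0)$-null ($\hg_{ii}=0$) and the two quartic relations $\F_5=\F_6=0$ holding; we impose in addition $A\in SO(4)$ and, using $\xi$, that the four null legs form a future-pointing $\{\ell\ell\ell\ell\}$-frame in the order fixed by the coordinates. At a fixed $\psi_0$ the six scale-homogeneous equations $\F_i=0$ cut the $6$-dimensional group $SO(4)$ down to a finite set of frames, interchanged by sign flips of the legs, time reversal of the null cone, orientation reversal and permutation of the legs; $SO(4)$ together with the future orientation supplied by $\xi$ and the prescribed ordering retains exactly one of them. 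The genuinely delicate step --- the one I expect to be the main obstacle --- is the elimination of the apparent four-parameter freedom in the target value $\psi_0=f(\tau_0)$: this is where the full hypotheses (connected, simply connected, time oriented) must be used, presumably through a rigidity argument showing that the generic model $\epsilon_{ij}\nu_i\nu_j$ admits no non-trivial $\{\ell\ell\ell\ell\}$-self-equivalence with Jacobian in $SO(4)$, so that two global solutions with $SO(4)$-Jacobian necessarily agree at $\tau_0$ and hence, lying on the same leaf of $T_0(z)$, everywhere. Granting this, $f$ is unique, and then --- exactly as in the proof of the first theorem --- each $\nu_j$ is recovered from $(\nu_j)^2=\epsilon_{ij}\epsilon_{jk}\epsilon_{ik}\,\hg_{ij}\hg_{jk}/\hg_{ik}$ for any two indices $i,k$ distinct from $j$, together with $\nu_j>0$, so the four positive functions are unique as well.
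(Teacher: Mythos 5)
Your proposal contains a gap that you yourself flag: the elimination of the residual freedom in the target value $f(\tau_0)$. Everything before that point (the conformal parallelism of $Df$ forced by \eqref{zijk}--\eqref{zk}, the flat-connection/trivial-holonomy picture over the simply connected $\stm$, the bijection between global solutions and points of the fiber over $\tau_0$) reduces the theorem to the claim that the fiber of $\R_1$ over $\tau_0$, cut down by the $SO(4)$ and orientation conditions, is a single point. You dispose of the $6$-dimensional frame freedom at a \emph{fixed} target $\psi_0$, but you leave the $4$-dimensional freedom in $\psi_0$ itself to a ``presumable rigidity argument'' that you do not supply. Since that is precisely the content of the uniqueness assertion, ``granting this'' amounts to granting the theorem; the proof is not complete.

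For comparison, the paper does not take your leafwise/holonomy route over $\stm$. It works over $\stm^2$: for each pair $\psi_0=(\tau,\psi)$ it forms $S^{\psi_0}=SO(4)\cap\R_1^{\psi_0}$, argues that this is a nonempty \emph{finite} set because the sixteen homogeneous polynomial conditions (the six equations \eqref{r1phi}, the six orthogonality relations among the columns, and the normalizations) match the sixteen unknowns $\psi^i_j$, invokes continuity of roots to make the cardinality locally constant, and then treats $\cup_{\psi_0}S^{\psi_0}$ as a covering of the simply connected $\stm^2$ to conclude that each $\psi_0$ has a single preimage under $\alpha\times\beta$. Whatever one thinks of the rigor of that argument, it is structurally different from yours in that the covering is taken over source--target \emph{pairs}, which is exactly the device that absorbs the target freedom you could not eliminate. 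A secondary remark: your identity $Df(\tau)=\bigl(|\det g(f(\tau_0))|/|\det g(f(\tau))|\bigr)^{1/8}\,Df(\tau_0)$ is a correct consequence of \eqref{zijk}--\eqref{zk} and is sharper than anything stated in the paper, but it sits uneasily with the requirement that $Df\in SO(4)$ at \emph{every} point (which would force $|\det g|$ to be constant along the image of $f$); you should either resolve that tension or recognize it as evidence that the pointwise orthogonality cannot be imposed off a single fiber in the way your first paragraph assumes.
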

\begin{proof}
Let $\psi_0\equiv(\tau,\,\psi)$ be any  point in $\stm^2$ and a matrix $\Psi\equiv(\psi^i_j)\in\alpha_1^{-1}(\tau)\times\beta_1^{-1}(\psi)\equiv\R_1^{\psi_0}$. Then, in particular, we have $\det\Psi\neq0$\,, and from the precedent proof  we have also $\dim\R_1^{\psi_0}=10$.  Let  $\psi_0$ be a fixed point, then the coefficients $\psi^i_j$ of $\Psi$ satisfy a system consisting of the six homogeneous equations \eqref{r1phi}. If, moreover, the four column vectors $\phi_k\equiv(\psi_k^i)$ ($k=1,\,\ldots,4$) are orthogonal each to the others, then, additionally, $\Psi$ verifies a system consisting of six multivariate quadratic equations $Q_i(\Psi)=0$  ($i=1,\,\ldots,\,6$) (the six scalar products of the four column vectors $\phi_k$). Hence, let $r_{\psi_0}$ be the smooth map such that $r_{\psi_0}:\Psi\in\R_1^{\psi_0}\longrightarrow(Q_1(\Psi),\,\ldots,\,Q_6(\Psi))\in\mathbb{R}^{6}$\,, then, we can show that $\ker r_{\psi_0}$ is a nonempty four dimensional manifold \cite[see Lemma 1, p.11]{MilnorTopo:97}. Indeed, the tangent map $Tr_{\psi_0}$ is regular in $\R_1^{\psi_0}$ because the coefficients of $Tr_{\psi_0}$ are linear with respect to $\Psi$, and then, if $\det Tr_{\psi_0}=0$, we would have the four vectors $\phi_k$ not linearly independent, which is not possible from the relation $\det\Psi\neq0$. In addition, because the twelve polynomials $Q_i$ and $\F_j$ are homogeneous, then the four vectors $\phi_k$ can be normalized, and thus, $\Psi\in{}SO(4)$. It follows that 1) $S^{\psi_0}\equiv{}SO(4)\cap\R_1^{\psi_0}$ is not empty, and 2) $S^{\psi_0}$ is a real semialgebraic set consisting of sixteen homogeneous multivariate polynomial equations of even degrees and one inequation. Consequently, because there are as many algebraic equations than unknowns, we obtain a nonempty \textit{finite} \cite[\S~2.3]{BochCostRoy:98,CosteRAAG} set $s(\psi_0)$ of real roots $\Psi\in SO(4)$ which are solutions of the system \eqref{r1phi}. Moreover, from the `continuity of roots' property, the continuity of $g$ on $\stm$ and the connexity of $\stm^2$, we deduce that $|s(\psi_0)|$ is constant over $\stm^2$; And then, the set $\cup_{\psi_0\in\stm^2} S^{\psi_0}$ is a covering of $\stm^2$ which is universal because $\stm^2$ is simply connected. 
Therefore, there is only one preimage of $\psi_0$ under $\alpha\times\beta$ in $S^{\psi_0}$. 
\end{proof}

\end{document}